\newtheorem{thm}{Theorem}
\newtheorem{cor}{Corollary}
\newenvironment{keyword}{\par{\noindent\bf Keywords:}}
\begin{document}

\title{Bottleneck combinatorial optimization problems with uncertain costs and the OWA criterion}

\author{Adam Kasperski\thanks{Corresponding author}\\
   {\small \textit{Institute of Industrial}}
  {\small \textit{Engineering and Management,}}
  {\small \textit{Wroc{\l}aw University of Technology,}}\\
  {\small \textit{Wybrze{\.z}e Wyspia{\'n}skiego 27,}}
  {\small \textit{50-370 Wroc{\l}aw, Poland,}}
  {\small \textit{adam.kasperski@pwr.wroc.pl}}
  \and
  Pawe{\l} Zieli{\'n}ski\\
    {\small \textit{Institute of Mathematics}}
  {\small \textit{and Computer Science}}
  {\small \textit{Wroc{\l}aw University of Technology,}}\\
  {\small \textit{Wybrze{\.z}e Wyspia{\'n}skiego 27,}}
  {\small \textit{50-370 Wroc{\l}aw, Poland,}}
  {\small \textit{pawel.zielinski@pwr.wroc.pl}}} 
  
  \date{}
    
\maketitle

\begin{abstract}
In this paper a class of bottleneck combinatorial optimization problems with uncertain costs is discussed. The uncertainty is modeled by specifying a discrete scenario set containing a finite number of cost vectors, called scenarios. In order to choose a solution the Ordered Weighted Averaging aggregation operator (shortly OWA) is applied. The OWA operator generalizes traditional criteria in decision making under uncertainty such as the maximum, minimum, average, median, or Hurwicz criterion. New complexity and approximation results in this area are provided. These results are general and remain valid for many problems, in particular for a wide class of network problems.
 \end{abstract}
 
\begin{keyword}
 combinatorial optimization; robust approach; computational complexity; OWA operator; network problems
 \end{keyword}

\section{Introduction.}\label{intro}

In a \emph{combinatorial optimization problem} we are given a finite set of \emph{elements} $E=\{e_1,\dots, e_n\}$ and a set of \emph{feasible solutions} $\Phi\subseteq 2^E$. In a deterministic case, each element $e_i\in E$ has some nonnegative cost $c_i$ and we seek a feasible solution $X\in \Phi$ for which a given cost function $F(X)$ attains minimum. Two types of the cost function are commonly used. The first one, called a \emph{linear sum cost function}, is defined as $F(X)=\sum_{e_i\in X} c_i$, and the second, called a \emph{bottleneck cost function}, is defined as $F(X)=\max_{e_i\in X} c_i$. In this paper, we focus on the bottleneck cost function  and we consider the following \emph{bottleneck combinatorial optimization problem}:
\begin{equation}
\label{pf}
\mathcal{BP}:\; \min_{X\in \Phi} F(X)=\min_{X\in \Phi} \max_{e_i\in X} c_i.
\end{equation}

Formulation~(\ref{pf}) encompasses a large class of problems. In this paper we will consider an important class of network problems, where $E$ is a set of arcs of a given graph $G$ and $\Phi$ contains the subsets of the arcs forming some object in $G$ such as $s-t$ paths (\textsc{Bottleneck Path}), spanning trees (\textsc{Bottleneck Spanning Tree}), $s-t$ cuts (\textsc{Bottleneck Cut}),  or perfect matchings (\textsc{Bottleneck Assignment}). All these network problems are polynomially solvable and were discussed, for example, in~\cite{C78, GT88, P91, P96}.

Because the size of the set $\Phi$ is typically exponential in $n$, we provide with $E$ 
an \emph{efficient description} of $\Phi$, for example a set of constraints that define $\Phi$ or a polynomial time algorithm, which decides whether a given subset of the elements belongs to $\Phi$. We will also consider the following problem $\mathcal{FBP}$ associated with $\mathcal{BP}$:  given a  set of elements $E$ and 
an efficient description of $\Phi$,
check if $\Phi$ is nonempty and if so, return any solution from $\Phi$. For instance,
 in \textsc{Bottleneck Path}, the $\mathcal{FBP}$ problem  consists in checking if there is at least one path between two distinguished nodes of a network encoded in some standard way and if so, returning any such a path. It is easy to verify that $\mathcal{BP}$ can always be reduced to solving at most $n$ problems $\mathcal{FBP}$ and an algorithm works as follows.  We first order the elements of $E$ with respect to nonincreasing costs and remove them one by one in this order, each time solving the resulting 
$\mathcal{FBP}$ problem with the modified description of $\Phi$. We stop when $\Phi$ becomes empty and the last feasible solution enumerated must minimize the bottleneck cost. This method is general and implies that $\mathcal{BP}$ is polynomially solvable if only $\mathcal{FBP}$ is polynomially solvable. We will make use of this idea later in this paper. Let us, however, point out that for some particular problems faster algorithms exist and for descriptions of them we refer the reader to~\cite{C78, GT88, P91, P96}.

The assumption that the element costs are precisely known in advance is often unrealistic. In many cases decision makers can rather provide a set of possible realizations of the element costs. This set is called a \emph{scenario set} and will be denoted by $\Gamma$.  Each vector $\pmb{c}=(c_1,\dots, c_n)\in \Gamma$ is called a \emph{scenario} and represents a realization of the element costs which may occur with a positive but perhaps unknown probability. Several methods of describing scenario set $\Gamma$ were proposed in the existing literature. Among the most popular are the \emph{discrete} and \emph{interval} uncertainty representations~\cite{KY97}. In the discrete uncertainty representation, which will be used in this paper,  $\Gamma=\{\pmb{c}_1,\dots,\pmb{c}_K\}$ contains $K\geq 1$ explicitly given cost scenarios. In the interval uncertainty representation, it is assumed that each element cost may fall within some closed interval and $\Gamma$ is the Cartesian product of all these intervals. In the discrete uncertainty representation, each scenario can model some event that has a global influence on the element costs. On the other hand,
the interval uncertainty representation is appropriate when each element cost may vary within
some range independently on the values of the other costs. Some other methods of defining scenario sets were described, for instance,  in~\cite{AV10,BN09,BS03}.

When scenario set $\Gamma$ contains more than one scenario,  an additional criterion is required to choose a solution. If no probability distribution in $\Gamma$ is provided, then the \emph{robust optimization} framework is widely used~\cite{BN09, KY97}. The idea of robust optimization is to choose a solution minimizing a worst case performance over $\Gamma$. A typical robust criterion is the maximum, i.e. we seek a solution minimizing the largest cost over all scenarios. It turns out that for the linear sum cost function  the minmax problems are typically NP-hard for two scenarios even if the underlying deterministic problem is polynomially solvable. This is the case for the minmax versions of the shortest path~\cite{KY97}, selecting items~\cite{AV01} and other 
classical problems described, for instance, in~\cite{ABV09,KY97}. On the other hand, for the bottleneck cost function the situation is quite different. If the deterministic $\mathcal{BP}$ problem is polynomially solvable, then its minmax version with $K$ scenarios is also polynomially solvable (a straightforward proof of this fact can be found in~\cite{ABV09}). In consequence, the robust bottleneck problems are easier to solve than their corresponding versions with the linear sum cost function. 

The robust (minmax) approach is often regarded as too conservative or pessimistic.  It follows from the fact that the minmax criterion takes into account only one, the worst-case scenario, ignoring the information connected with the remaining scenarios. This fact is well known in decision theory and sample problems in which the minmax criterion seems to be not appropriate can be found, for example, in~\cite{LR57}. 

In this paper, we consider the discrete scenario uncertainty representation, so we assume that the scenario set $\Gamma$ contains $K$ distinct cost scenarios. In order to choose a solution we use the \emph{Ordered Weighted Averaging} aggregation operator (shortly OWA) proposed by Yager in~\cite{YA88}. The key element of the OWA operator are weights whose number equals the number of scenarios. Namely, the $j$th weight expresses an importance of the $j$th largest solution cost over all scenarios. By using various weight distributions one can obtain different criteria such as the maximum, minimum, average, median or Hurwicz. The weights can model an attitude of decision makers towards the risk and allow them to take more information into account while computing a solution. The OWA operator is typically used in multiobjective decision problems (see, e.g.,~\cite{GPS10,GS12,OS03}). However, it is natural to use it also under uncertainty by treating each scenario as a  criterion. 
In this paper we explore the computational properties of the problem of minimizing the OWA criterion in bottleneck combinatorial optimization problems for various weight distributions. 

\section{Problem formulation}

Let scenario set $\Gamma=\{\pmb{c}_1,\dots,\pmb{c}_K\}$ contain $K$ distinct cost scenarios, where $\pmb{c}_j=(c_{1j},\dots,c_{nj})$ for $j\in [K]$ (we use $[K]$ to denote the set $\{1,\dots,K\}$). The bottleneck cost of a given solution~$X$ depends on scenario~$\pmb{c}_j$ and 
will be denoted by
$F(X,\pmb{c}_j)=\max_{e_i\in X} c_{ij}$. Let us introduce weights $w_1,\dots,w_K$ such that $w_j\in [0,1]$ for all $j\in [K]$ and $w_1+\dots+w_K=1$. For a given solution $X$, let $\sigma$ be a permutation of $[K]$ such that $F(X,\pmb{c}_{\sigma(1)})\geq \dots\geq F(X,\pmb{c}_{\sigma(K)})$. The \emph{Ordered Weighted Averaging} aggregation operator (OWA) is defined as follows~\cite{YA88}:
$$\mathrm{OWA}(X)=\sum_{j\in [K]} w_j F(X,\pmb{c}_{\sigma(j)}).$$
The OWA operator has several natural properties which follow directly from its definition. Since it is a convex combination of the cost functions, 
$\min_{j\in [K]} F(X,\pmb{c}_j)\leq \mathrm{OWA}(X) \leq \max_{j\in [K]} F(X,\pmb{c}_j)$. It is also monotonic, i.e. if $F(Y,\pmb{c}_j)\geq F(X,\pmb{c}_j)$ for all $j\in [K]$, then 
$\mathrm{OWA}(Y)\geq \mathrm{OWA}(X)$, idempotent, i.e. if $F(X,\pmb{c}_1)=\dots=F(X,\pmb{c}_K)=a$, then $\mathrm{OWA}(X)=a$ and symmetric, i.e. its value does not depend on the order of scenarios. In this paper we examine the following optimization problem:
$$\textsc{Min-Owa}~\mathcal{BP}:\; \min_{X\in \Phi} \mathrm{OWA} (X).$$

We now discuss several special cases of the OWA operator and the corresponding \textsc{Min-Owa}~$\mathcal{BP}$ problem (see also Table~\ref{tabsc}). If $w_1=1$ and $w_j=0$ for $j=2,\dots,K$, then OWA becomes the maximum and the corresponding problem is denoted as  \textsc{Min-Max}~$\mathcal{BP}$. This is a typical problem considered in the robust optimization framework. If $w_K=1$ and $w_j=0$ for $j=1,\dots,K-1$, then OWA becomes the minimum and the corresponding problem is denoted as \textsc{Min-Min}~$\mathcal{BP}$. In general, if $w_k=1$ and $w_j=0$ for $j\in [K]\setminus\{k\}$, then OWA is the $k$-th largest element and the problem is denoted as \textsc{Min-Quant}$(k)$~$\mathcal{BP}$. In particular, when $k=\lfloor K/2 \rfloor +1$, the $k$-th element is the median and the problem consists in minimizing the median of the costs and is denoted as \textsc{Min-Median}~$\mathcal{BP}$.
If $w_j=1/K$ for all $j\in [K]$, i.e. when the  weights are \emph{uniform}, then OWA is the average (or the Laplace criterion) and the problem is denoted as \textsc{Min-Average}~$\mathcal{BP}$. Finally, if $w_1=\alpha$ and $w_K=1-\alpha$ for some fixed $\alpha\in [0,1]$ and $w_j=0$ for the remaining weights, then we get the \emph{Hurwicz pessimism-optimism} criterion and the problem is denoted as \textsc{Min-Hurwicz}~$\mathcal{BP}$.
\begin{table}
\caption{Special cases of \textsc{Min-Owa}~$\mathcal{BP}$. \label{tabsc}}
\begin{tabular}{ll}
 \hline
	  Name of the problem & Weight distribution \\ \hline
		\textsc{Min-Max}~$\mathcal{BP}$ & $w_1=1$ and $w_j=0$ for $j=2,\dots,K$ \\
		\textsc{Min-Min}~$\mathcal{BP}$ & $w_K=1$ and $w_j=0$ for $j=1,\dots,K-1$ \\
		\textsc{Min-Average}~$\mathcal{BP}$ & $w_j=1/K$ for $j\in [K]$ \\
		\textsc{Min-Quant}$(k)$~$\mathcal{BP}$ & $w_k=1$ and $w_j=0$ for $j\in [K]\setminus \{k\}$ \\
		\textsc{Min-Median}~$\mathcal{BP}$ & $w_{\lfloor K/2 \rfloor +1}=1$ and $w_j=0$ for $j\in [K] \setminus \{\lfloor K/2 \rfloor +1\}$\\
		\textsc{Min-Hurwicz}~$\mathcal{BP}$ & $w_1=\alpha$, $w_K=1-\alpha$, $\alpha\in [0,1]$ and $w_j=0$ for $j\in [K]\setminus\{1,K\}$ \\ \hline
\end{tabular}
\end{table}

\section{Polynomially solvable cases}
\label{secpoly}

In this section, we identify the cases of \textsc{Min-Owa}~$\mathcal{BP}$ which are polynomially solvable. 
It is easy to check that among the problems listed in Table~\ref{tabsc}, \textsc{Min-Max}~$\mathcal{BP}$ and \textsc{Min-Min}~$\mathcal{BP}$ are polynomially solvable if only $\mathcal{BP}$ is polynomially solvable. It is straightforward to check that an optimal solution to \textsc{Min-Max}~$\mathcal{BP}$ can be obtained by computing an optimal solution for the costs $\hat{c}_i=\max_{j\in [K]} c_{ij}$, $e_i\in E$ (see~\cite{ABV09}). This can be done in $O(nK+g(n))$ time, where $g(n)$ is the time required for solving the deterministic $\mathcal{BP}$ problem.
 On the other hand, the  \textsc{Min-Min}~$\mathcal{BP}$ problem can be solved by computing an optimal solution under each scenario and choosing the one with the smallest bottleneck cost, which can be done in $O(Kg(n))$ time.

We now discuss the general problem in which the number of scenarios $K$ is constant (it is not a part of the input). 
\begin{thm}
\label{thm1}
If $\mathcal{FBP}$ is solvable in $f(n)$ time, then \textsc{Min-OWA}~$\mathcal{BP}$ is solvable in $O(n^K (f(n)+K\log K))$ time.
\end{thm}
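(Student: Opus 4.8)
The plan is to exploit the fact that each bottleneck value $F(X,\pmb{c}_j)=\max_{e_i\in X}c_{ij}$ is always attained by some element cost, so the vector of bottleneck costs $(F(X,\pmb{c}_1),\dots,F(X,\pmb{c}_K))$ of any feasible solution takes at most $n^K$ distinct values: its $j$th entry belongs to $\{c_{1j},\dots,c_{nj}\}$. I would therefore enumerate all candidate threshold vectors $\pmb{t}=(t_1,\dots,t_K)$ with $t_j\in\{c_{1j},\dots,c_{nj}\}$ for each $j\in[K]$, of which there are at most $n^K$, and for each one reduce the task to a single feasibility query, in the spirit of the element-removal reduction to $\mathcal{FBP}$ described in the introduction.

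For a fixed threshold vector $\pmb{t}$, I would restrict the ground set to the elements that are admissible under every scenario, namely $E(\pmb{t})=\{e_i\in E:\ c_{ij}\le t_j\ \text{for all}\ j\in[K]\}$, computable in $O(nK)$ time. Any $X\in\Phi$ built only from elements of $E(\pmb{t})$ satisfies $F(X,\pmb{c}_j)\le t_j$ for all $j$. Hence I would call $\mathcal{FBP}$ on the description of $\Phi$ with the ground set modified to $E(\pmb{t})$; if it returns a feasible solution, I record the OWA value of the vector $\pmb{t}$ itself, obtained by sorting its entries in $O(K\log K)$ time and forming $\sum_{j\in[K]}w_j t_{\sigma(j)}$. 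The algorithm outputs the feasible solution associated with the threshold vector of smallest recorded OWA value.

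Correctness rests on two directions, both using the monotonicity of OWA established above. For any feasible $\pmb{t}$ with witness $X\subseteq E(\pmb{t})$ we have $F(X,\pmb{c}_j)\le t_j$ for all $j\in[K]$, whence $\mathrm{OWA}(X)\le\mathrm{OWA}(\pmb{t})$; since $X\in\Phi$, the optimum is at most $\mathrm{OWA}(X)\le\mathrm{OWA}(\pmb{t})$, so every recorded value is at least the optimum. Conversely, an optimal solution $X^*$ yields a threshold vector $\pmb{t}^*$ with $t_j^*=F(X^*,\pmb{c}_j)$ for which $X^*$ is admissible and $\mathrm{OWA}(\pmb{t}^*)=\mathrm{OWA}(X^*)$ equals the optimum, so the optimum is itself recorded. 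The two bounds coincide, so the smallest recorded value equals the optimum, and the solution returned with it attains this value.

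The running time is $O(n^K(f(n)+K\log K))$ once the $O(nK)$ cost of forming each $E(\pmb{t})$ is absorbed into the per-threshold feasibility call, which is legitimate since for constant $K$ we have $nK=O(n)=O(f(n))$ whenever the feasibility check must at least read its instance. The one point requiring care — and the step I expect to be the main obstacle to state cleanly — is precisely the two-sided argument that enumerating thresholds rather than solutions loses nothing: one must verify that the bottleneck cost vector of an optimal solution is itself one of the enumerated thresholds, so that a solution whose true bottleneck costs may lie strictly below its threshold still certifies the \emph{exact} optimal value rather than merely an upper bound.
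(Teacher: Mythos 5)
Your proposal is correct and is essentially the paper's own proof: enumerating threshold vectors with $t_j\in\{c_{1j},\dots,c_{nj}\}$ is the same as the paper's enumeration of element vectors $\pmb{F}=(f_1,\dots,f_K)$, the restricted ground set $E(\pmb{t})$ matches the paper's $E_{\pmb{F}}$, and the two-sided correctness argument (any witness gives $\mathrm{OWA}(X)\le\mathrm{owa}(\pmb{t})$, while the optimal solution's own bottleneck vector is among the enumerated thresholds) is exactly the one used there. The running-time accounting, including the $O(K\log K)$ sorting per candidate, also coincides.
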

\begin{proof}
Let $\pmb{F}=(f_1,f_2,\dots, f_K)$ be a vector of, not necessarily distinct, elements of $E$. Element $f_j$ is associated with the scenario $\pmb{c}_j$, $j\in [K]$, and thus vector $\pmb{F}$ defines a vector of costs $\pmb{c_F}=(c'_1,c'_2,\dots,c'_K)$, where $c'_j$ is the cost of $f_j$ under $\pmb{c}_j$. 
Let $\sigma$ be a sequence of $[K]$ such that $c'_{\sigma(1)}\geq c'_{\sigma(2)} \geq \dots \geq c'_{\sigma(K)}$. Define 
${\rm owa}(\pmb{c_F})=\sum_{j\in [K]} w_j c'_{\sigma(j)}.$ 
Let $E_{\pmb{F}}\subseteq E$ be the subset of the elements constructed as follows. For each $j\in [K]$ and $e_i\in E$, if $c_{ij}>c'_{j}$ then $e_i$ is removed from $E$. The set $E_{\pmb{F}}$ contains all the elements which have been not removed.
Let $\Phi_{\pmb{F}}$ be the set of feasible solutions that contain only the elements from $E_{\pmb{F}}$. It is clear that for any $X\in \Phi_{\pmb{F}}$, ${\rm OWA}(X)\leq {\rm owa}(\pmb{c}_F)$. Let $X^*$ be an optimal solution to \textsc{Min-Owa}~$\mathcal{BP}$. We will show that there exists vector $\pmb{F^*}=(f^*_1,f^*_2,\dots, f^*_K)$ such that ${\rm OWA}(X^*)={\rm owa}(\pmb{c_{F^*}})$. Indeed, this equality holds if $f^*_j\in X^*$ is an element of the maximal cost in $X^*$ under scenario~$j$. Now the algorithm works as follows. We enumerate all possible vectors $\pmb{F}$. If the set $\Phi_{\pmb{F}}$ is not empty, then we choose any solution $X_{\pmb{F}}\in \Phi_{\pmb{F}}$. Among the solutions computed we choose the one, say $X_{\pmb{F'}}$, of the minimum ${\rm owa}(\pmb{c_{F'}})$. It is clear that $X_{\pmb{F'}}$ must be optimal.
Let us estimate the running time of the algorithm. The number of vectors $\pmb{F}$ which must be enumerated is bounded by $n^K$. For each $\pmb{F}$ the solution $X_{\pmb{F}}$ can be found by solving the $\mathcal{FBP}$ problem, which requires $f(n)$ time. Finally, computing ${\rm owa}({\pmb{c_F}})$ requires $K\log K$ time.  Hence the running time of the algorithm is bounded by $O(n^K (f(n)+K\log K))$. 
\end{proof}
\begin{cor}
\label{cor1}
If $\mathcal{FBP}$ is polynomially solvable and $K$ is constant, then \textsc{Min-Owa}~$\mathcal{BP}$ is polynomially solvable.
\end{cor}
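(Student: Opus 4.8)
The plan is to derive the corollary directly from Theorem~\ref{thm1} by substituting its two hypotheses into the running-time bound and verifying that the resulting expression is polynomial in $n$. First I would record what each hypothesis buys us. Since $\mathcal{FBP}$ is assumed polynomially solvable, there is a constant $d$, independent of the instance, such that $f(n)=O(n^d)$. Since $K$ is assumed to be constant rather than part of the input, the factor $n^K$ is itself a fixed polynomial in $n$, and the additive term $K\log K$ is merely a constant.

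Putting these observations together, Theorem~\ref{thm1} yields a running time of $O(n^K(f(n)+K\log K))$, and under the two hypotheses this reduces to $O(n^K(n^d+K\log K))=O(n^{K+d})$, a polynomial in $n$. Hence \textsc{Min-Owa}~$\mathcal{BP}$ is solvable in polynomial time, which is exactly the claim.

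I do not expect any genuine obstacle here, since all the real work is already carried out in Theorem~\ref{thm1}; the corollary is just the remark that each of its two hypotheses independently collapses one of the two potentially super-polynomial factors in the bound, namely $f(n)$ and $n^K$, down to a polynomial. The one point worth flagging is that both hypotheses are essential: if $K$ were part of the input rather than a fixed constant, the factor $n^K$ would be exponential and the argument would break down, so the statement should be read with $K$ treated strictly as a constant.
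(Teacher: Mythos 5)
Your proposal is correct and matches the paper's intent exactly: the corollary is stated without a separate proof precisely because it follows immediately from the $O(n^K(f(n)+K\log K))$ bound of Theorem~\ref{thm1} by observing that constant $K$ makes $n^K$ a fixed polynomial and polynomial $f(n)$ keeps the remaining factor polynomial. Your closing remark that both hypotheses are essential is consistent with the paper's later discussion of the unbounded-$K$ case.
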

Of course, the algorithm shown in the proof of Theorem~\ref{thm1} is efficient only when the number of scenarios is small. For large $K$ the result obtained is only theoretical and the approximation algorithm designed in Section~\ref{secappr} might be more appropriate.  We now show how Theorem~\ref{thm1} can be applied to solve the \textsc{Min-Hurwicz}~$\mathcal{BP}$ problem with unbounded $K$.
\begin{thm}
\label{thm2}
If $\mathcal{FBP}$ is  solvable in $f(n)$-time, then \textsc{Min-Hurwicz}~$\mathcal{BP}$ is solvable in $O(Kn^2f(n))$ time.
\end{thm}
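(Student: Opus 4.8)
The plan is to exploit the fact that the Hurwicz weights concentrate all their mass on the two extreme positions, so that for any solution $X$ the criterion collapses to
$$\mathrm{OWA}(X)=\alpha\,\max_{j\in[K]}F(X,\pmb{c}_j)+(1-\alpha)\min_{j\in[K]}F(X,\pmb{c}_j).$$
Write $M(X)=\max_{j}F(X,\pmb{c}_j)$ and $m(X)=\min_{j}F(X,\pmb{c}_j)$. The first observation I would make is that the maximum term simplifies: since $M(X)=\max_j\max_{e_i\in X}c_{ij}=\max_{e_i\in X}\max_j c_{ij}=\max_{e_i\in X}\hat c_i$ with $\hat c_i=\max_{j\in[K]}c_{ij}$, the value $M(X)$ is just the bottleneck of $X$ under the single aggregated vector $\hat{\pmb c}$. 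Hence $M(X)$ can take only one of the at most $n$ distinct values $\hat c_1,\dots,\hat c_n$, and this is exactly what keeps the problem tractable for an unbounded number of scenarios.

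The algorithm then enumerates the candidate value of the maximum term. For each threshold $M\in\{\hat c_1,\dots,\hat c_n\}$ I would restrict the ground set to $E_M=\{e_i\in E:\hat c_i\le M\}$, in the spirit of the removal step in the proof of Theorem~\ref{thm1}; every feasible $X\subseteq E_M$ then satisfies $M(X)\le M$. It remains to make the minimum term as small as possible inside $E_M$. Since $m(X)=\min_j\max_{e_i\in X}c_{ij}$ is a minimum over scenarios of an ordinary bottleneck cost, I would, for each scenario $j\in[K]$ separately, solve the deterministic $\mathcal{BP}$ problem on $E_M$ under the costs $c_{\cdot j}$ using the reduction of $\mathcal{BP}$ to $n$ calls of $\mathcal{FBP}$ described in Section~\ref{intro}; let $b_j(M)$ be its optimal value, with witness solution $X_{M,j}$ (set $b_j(M)=+\infty$ when $E_M$ admits no feasible solution). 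Taking $\min_{j\in[K]}b_j(M)$ yields the smallest attainable minimum term, so I would label the threshold $M$ with the surrogate cost $\alpha M+(1-\alpha)\min_{j}b_j(M)$ and finally return the solution attached to the threshold of smallest surrogate cost.

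For correctness I would establish the usual two-sided sandwich. On one side, each recorded solution $X_{M,j}$ lies in $E_M$, so $M(X_{M,j})\le M$ and $m(X_{M,j})\le F(X_{M,j},\pmb c_j)=b_j(M)$, whence $\mathrm{OWA}(X_{M,j})\le\alpha M+(1-\alpha)b_j(M)$; the surrogate is therefore always an upper bound on the true value of the solution it labels. On the other side, let $X^\ast$ be optimal with $M^\ast=M(X^\ast)$ and $m^\ast=m(X^\ast)$ attained under some scenario $j^\ast$. The value $M^\ast$ equals $\hat c_i$ for the $\hat{\pmb c}$-maximal element of $X^\ast$, so it is among the enumerated thresholds; moreover $X^\ast\subseteq E_{M^\ast}$, and solving scenario $j^\ast$ on $E_{M^\ast}$ gives $b_{j^\ast}(M^\ast)\le m^\ast$. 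Hence the minimum surrogate is at most $\alpha M^\ast+(1-\alpha)m^\ast=\mathrm{OWA}(X^\ast)$. Combining the two bounds, the returned solution has true OWA value no larger than $\mathrm{OWA}(X^\ast)$, hence it is optimal. I expect this reconciliation—showing that minimizing the surrogate $\alpha M+(1-\alpha)\min_j b_j(M)$ over the coarse grid of $n$ thresholds actually recovers the exact optimum—to be the main point requiring care, precisely because the threshold $M$ need not coincide with the true maximum of the solution that is eventually selected.

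The running time is then routine: there are at most $n$ thresholds, and for each one $K$ deterministic bottleneck problems are solved, each costing $O(nf(n))$ through the $\mathcal{FBP}$ reduction, with the $O(n\log n)$ sorting cost absorbed since $f(n)\ge\log n$. This yields the claimed $O(Kn^2f(n))$ bound.
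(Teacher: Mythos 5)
Your proof is correct, and it shares the paper's crucial opening move --- the identity $\max_{j\in[K]}\max_{e_i\in X}c_{ij}=\max_{e_i\in X}\hat c_i$, which collapses the maximum term to a single deterministic bottleneck cost --- but from there the two arguments diverge. The paper pushes the $\min_{j}$ to the outside, writing $\mathrm{OWA}(X)=\min_{j\in[K]}H_j(X)$ with $H_j(X)=\alpha\max_{e_i\in X}\hat c_i+(1-\alpha)\max_{e_i\in X}c_{ij}$, observes that each $H_j$ is itself a two-scenario \textsc{Min-Hurwicz}~$\mathcal{BP}$ instance (since $\hat c_i\geq c_{ij}$), and invokes Theorem~\ref{thm1} as a black box to minimize each $H_j$ in $O(n^2f(n))$ time; the optimum is then the best of the $K$ resulting solutions. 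You instead enumerate the $n$ candidate values $M$ of the maximum term, restrict the ground set to $E_M$, and for each scenario solve a deterministic single-scenario bottleneck problem via the $\mathcal{BP}$-to-$\mathcal{FBP}$ reduction, with an explicit two-sided sandwich argument to reconcile the surrogate $\alpha M+(1-\alpha)\min_j b_j(M)$ with the true optimum. Your sandwich is sound: the witness attached to each threshold has true OWA value at most its surrogate, and the threshold $M^\ast=M(X^\ast)$ certifies that the minimum surrogate does not exceed $\mathrm{OWA}(X^\ast)$. What the paper's route buys is modularity --- it reuses Theorem~\ref{thm1} and needs almost no new correctness argument; what yours buys is self-containment and a slightly more transparent algorithm (both enumerations ultimately cost $n^2$ per scenario, so the bounds coincide). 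One small caveat: your remark that the sorting cost is absorbed ``since $f(n)\geq\log n$'' is an unstated assumption, though harmless, since $O(Kn\log n)$ is dominated by $O(Kn^2f(n))$ for any $f(n)=\Omega(1)$ anyway.
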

\begin{proof}
The Hurwicz criterion with parameter $\alpha \in [0,1]$ can be rewritten as follows:
$${\rm OWA}(X)=\alpha\max_{j\in [K]}\max_{e_i\in X} c_{ij} +(1-\alpha)\min_{j\in [K]} \max_{e_i\in X} c_{ij} =\alpha \max_{e_i\in X} \max_{j\in [K]} c_{ij}+(1-\alpha)\min_{j\in [K]} \max_{e_i\in X} c_{ij}.$$
Let $\hat{c}_i=\max_{j\in [K]} c_{ij}$. Then
$${\rm OWA}(X)=\alpha \max_{e_i\in X} \hat{c}_i + (1-\alpha)\min_{j\in [K]} \max_{e_i\in X} c_{ij} =\min_{j\in [K]} (\alpha \max_{e_i \in X} \hat{c}_i + (1-\alpha)\max_{e_i\in X} c_{ij}).$$
Let us define
$$H_j(X)=\alpha \max_{e_i \in X} \hat{c}_i + (1-\alpha)\max_{e_i\in X} c_{ij},\;\; j\in [K].$$
Hence
$${\rm OWA}(X)=\min_{j\in [K]} H_j(X).$$
Let $X_j$ minimize $H_j(X)$. It is easy to see that an optimal solution to \textsc{Min-Hurwicz}~$\mathcal{BP}$ is among $X_1,\dots,X_K$. So, the problem reduces to solving $K$ auxiliary optimization problems consisting in minimizing $H_j(X)$, $j\in [K]$. But each of these problems is the \textsc{Min-Hurwicz}~$\mathcal{BP}$ with $\alpha$ and only two scenarios, namely $\pmb{c}_1=(\hat{c}_1,\dots,\hat{c}_n)$ and $\pmb{c}_2=(c_{1j},\dots,c_{nj})$, which follows from the fact that $\hat{c}_i\geq c_{ij}$ for each $i,j\in [n]$. Thus, according to Theorem~\ref{thm1}, each $X_j$ can be computed in $O(n^2 f(n))$ time. In consequence, the \textsc{Min-Hurwicz}~$\mathcal{BP}$ problem is solvable in $O(Kn^2f(n))$ time.
\end{proof}
\begin{cor}
\label{cor3a}
If $\mathcal{FBP}$ is polynomially solvable, then \textsc{Min-Hurwicz}~$\mathcal{BP}$ is polynomially solvable.
\end{cor}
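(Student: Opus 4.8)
The plan is to obtain this statement as an immediate consequence of Theorem~\ref{thm2}, so the work is almost entirely a matter of reading off the right quantifier. First I would recall the running-time bound $O(Kn^2 f(n))$ established there for \textsc{Min-Hurwicz}~$\mathcal{BP}$, and then unpack the hypothesis: to say that $\mathcal{FBP}$ is polynomially solvable is precisely to say that the function $f(n)$ occurring in that bound is dominated by some polynomial $p(n)$.

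Next I would argue that substituting $f(n)\le p(n)$ yields a running time $O(Kn^2 p(n))$ which is polynomial in the total input length. The instance is specified by the $K$ scenario vectors (each of length $n$), the single Hurwicz parameter $\alpha$, and an efficient description of $\Phi$; hence the input size is at least $\Omega(nK)$. Since $K$ enters the bound only linearly and is itself part of the input, the product $Kn^2 p(n)$ is polynomial in the size of the instance, which is exactly the claim.

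The point worth stressing — rather than a genuine obstacle, since none exists here — is the contrast with Corollary~\ref{cor1}. In that corollary the factor $n^K$ forces the number of scenarios to be held constant, whereas the decomposition ${\rm OWA}(X)=\min_{j\in[K]} H_j(X)$ used in Theorem~\ref{thm2} reduces the problem to $K$ two-scenario subproblems and thereby makes $K$ appear only as a linear multiplier. It is this linear, rather than exponential, dependence on $K$ that lets the conclusion hold for \emph{unbounded} $K$, and verifying that this dependence survives the polynomial bound on $f(n)$ is all that the proof requires.
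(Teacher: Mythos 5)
Your proposal is correct and matches the paper's (implicit) reasoning exactly: the corollary is stated as an immediate consequence of Theorem~\ref{thm2}, and your substitution of a polynomial bound for $f(n)$ into the $O(Kn^2 f(n))$ running time, together with the observation that $K$ enters only linearly and is part of the input, is precisely the intended argument. Nothing further is needed.
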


We now consider the problem of minimizing the $k$th largest cost, i.e. the problem \textsc{Min-Quant}($k$)~$\mathcal{BP}$. It is clear that this problem is polynomially solvable for $k=1$ and $k=K$ if only $\mathcal{BP}$ is polynomially solvable. We now show that \textsc{Min-Quant}($k$)~$\mathcal{BP}$ is polynomially solvable for any $k$, provided that $k$ is constant.
\begin{thm}
\label{thm3}
		If \textsc{Min-max}~$\mathcal{BP}$ is solvable in $f(n)$ time, then \textsc{Min~Quant}($k$)~$\mathcal{BP}$ is solvable in $O\left(\binom{K}{k-1} f(n)\right)$ time.
\end{thm}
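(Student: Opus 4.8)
The plan is to express the $k$th largest cost of a solution as a minimum of ordinary maxima taken over small subsets of scenarios, and then invoke the assumed algorithm for \textsc{Min-Max}~$\mathcal{BP}$ a controlled number of times. The starting point is the following pointwise identity, which I would establish first: for every fixed $X\in\Phi$,
$$\mathrm{OWA}(X)=\min_{S\subseteq [K],\,|S|=k-1}\;\max_{j\in [K]\setminus S} F(X,\pmb{c}_j).$$
Recall that here $\mathrm{OWA}(X)$ is exactly the $k$th largest value among $F(X,\pmb{c}_1),\dots,F(X,\pmb{c}_K)$. The ``$\leq$'' direction is obtained by choosing $S$ to be the indices of the $k-1$ largest costs, so that the maximum over the remaining scenarios equals precisely the $k$th largest cost. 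The ``$\geq$'' direction is a short counting argument: any $S$ of cardinality $k-1$ must leave at least one of the $k$ largest-cost scenarios outside $S$, hence $\max_{j\in [K]\setminus S}F(X,\pmb{c}_j)$ can never be smaller than the $k$th largest cost.

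Having this identity, I would minimize both sides over $X\in\Phi$ and interchange the two minimizations, which is legitimate since both index sets are finite:
$$\min_{X\in\Phi}\mathrm{OWA}(X)=\min_{S\subseteq [K],\,|S|=k-1}\;\min_{X\in\Phi}\;\max_{j\in [K]\setminus S} F(X,\pmb{c}_j).$$
For each fixed $S$ the inner problem $\min_{X\in\Phi}\max_{j\in [K]\setminus S}F(X,\pmb{c}_j)$ is nothing but an instance of \textsc{Min-Max}~$\mathcal{BP}$ restricted to the scenario subset $\{\pmb{c}_j:j\in [K]\setminus S\}$, which by assumption is solvable in $f(n)$ time. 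The algorithm therefore enumerates all subsets $S$ of size $k-1$, solves the associated \textsc{Min-Max}~$\mathcal{BP}$ instance for each, and returns the best solution found; the swapped identity guarantees this solution attains $\min_{X\in\Phi}\mathrm{OWA}(X)$.

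For the running time, there are $\binom{K}{k-1}$ subsets to consider, and each restricted instance involves at most $K$ scenarios, so one call to the \textsc{Min-Max}~$\mathcal{BP}$ solver costs $f(n)$; the total is $O\!\left(\binom{K}{k-1}f(n)\right)$ as claimed. The only real obstacle is the correctness of the min--max representation of the $k$th largest element, and in particular the ``$\geq$'' inequality; once that combinatorial fact is verified, the interchange of minima and the reduction to repeated \textsc{Min-Max}~$\mathcal{BP}$ are immediate, and the complexity bound follows directly from the subset count.
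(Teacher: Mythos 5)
Your proposal is correct and follows essentially the same route as the paper: enumerate all scenario subsets of size $k-1$, solve \textsc{Min-Max}~$\mathcal{BP}$ on each complement, and return the best of the $\binom{K}{k-1}$ solutions. The only difference is that you explicitly verify the min--max representation of the $k$th largest cost, a step the paper dismisses as ``straightforward to verify,'' and your verification is sound.
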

\begin{proof}
	The algorithm works as follows. We enumerate all the subsets of scenarios of size $k-1$. For each such a subset, say $C$, we solve the \textsc{Min-Max}~$\mathcal{BP}$ problem for the scenario set $\Gamma\setminus C$ obtaining a solution $X_C$. Among the computed solutions we return $X_C$ for which the maximal bottleneck cost over $\Gamma\setminus C$ is minimal. It is straightforward to verify that this solution must be optimal. 
\end{proof}
\begin{cor}
		If $k$ or $K-k$ is constant and $\mathcal{BP}$ is polynomially solvable, then \textsc{Min~Quant}($k$)~$\mathcal{BP}$ is polynomially solvable.
\end{cor}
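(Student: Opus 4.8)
The plan is to obtain this as a direct consequence of Theorem~\ref{thm3}, which guarantees that \textsc{Min~Quant}($k$)~$\mathcal{BP}$ is solvable in $O\!\left(\binom{K}{k-1}f(n)\right)$ time once \textsc{Min-Max}~$\mathcal{BP}$ is solvable in $f(n)$ time. Accordingly, two polynomiality checks remain: the per-iteration cost $f(n)$, and the iteration count $\binom{K}{k-1}$. For the former, I would invoke the observation made at the beginning of Section~\ref{secpoly}: whenever $\mathcal{BP}$ is polynomially solvable, so is \textsc{Min-Max}~$\mathcal{BP}$, since an optimal solution is obtained by solving $\mathcal{BP}$ once for the aggregated costs $\hat{c}_i=\max_{j\in[K]}c_{ij}$. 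Hence $f(n)$ is bounded by a polynomial in the input size.

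For the iteration count I would treat the two regimes separately. If $k$ is constant, then $\binom{K}{k-1}\le K^{k-1}$ is a polynomial in $K$, so the bound of Theorem~\ref{thm3} is already polynomial. If instead $K-k$ is constant, I would use the symmetry of the binomial coefficient, $\binom{K}{k-1}=\binom{K}{K-k+1}\le K^{K-k+1}$; since the exponent $K-k+1$ is a constant, this is again polynomial in $K$. Multiplying by the polynomial $f(n)$ then yields an overall polynomial running time in both cases.

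The only point that needs care --- and the place I expect a careless reader to stumble --- is the enumeration itself when $K-k$ is constant while $k$ is large. Listing all $\binom{K}{k-1}$ subsets $C$ of size $k-1$ directly would be awkward, so I would instead enumerate their complements $S=\Gamma\setminus C$, which have the constant size $K-k+1$; there are $\binom{K}{K-k+1}$ of these and each is cheap to write down. This is consistent with the identity that the $k$th largest value of $\{F(X,\pmb{c}_j)\}_{j\in[K]}$ equals $\min_{|S|=K-k+1}\max_{j\in S}F(X,\pmb{c}_j)$, so the procedure of Theorem~\ref{thm3} can be read as: for every scenario subset $S$ of size $K-k+1$ minimize $\max_{j\in S}F(X,\pmb{c}_j)$ over $X$, then return the best solution found. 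Correctness is inherited verbatim from Theorem~\ref{thm3}, so no new argument is required there; the substance of the corollary is precisely the elementary fact that $\binom{K}{k-1}$ stays polynomial at both ends of the range of $k$, and there is no genuine obstacle beyond recording this.
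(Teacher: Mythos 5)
Your proposal is correct and follows exactly the route the paper intends: the corollary is an immediate consequence of Theorem~\ref{thm3} combined with the fact (from the start of Section~\ref{secpoly}) that \textsc{Min-Max}~$\mathcal{BP}$ is polynomially solvable via the aggregated costs $\hat{c}_i=\max_{j\in[K]}c_{ij}$, plus the observation that $\binom{K}{k-1}$ is polynomial in $K$ when either $k$ or $K-k$ is constant. The remark about enumerating complements is a harmless refinement; the argument matches the paper's.
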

Again, the algorithm proposed in the proof of Theorem~\ref{thm3} is efficient only when $k$ is close to~1 or close to $K$. It has the largest running time when $k=\lfloor K/2 \rfloor +1$, i.e. when we minimize the median of the costs. In this case, however, $k$ is not constant since its value depends on $K$.  In the next section we will show that the \textsc{Min-Median}~$\mathcal{BP}$ problem can be strongly NP-hard and not at all approximable even if $\mathcal{BP}$ is polynomially solvable.

\section{Hard cases}

In order to identify the hard cases of \textsc{Min-Owa}~$\mathcal{BP}$, we have to assume that the number of scenarios $K$ is unbounded (it is a part of the input). Otherwise, \textsc{Min-Owa}~$\mathcal{BP}$ is polynomially solvable (see Theorem~\ref{thm1}). In this section, we first show several negative results when $\mathcal{BP}$ is \textsc{Bottleneck Path}. We then show how these negative results can be extended to other network problems such as
\textsc{Bottleneck Spanning Tree}, \textsc{Bottleneck Cut}, or \textsc{Bottleneck Assignment}.
\begin{thm}
\label{thm4}
If $K$ is unbounded, then \textsc{Min-Average Bottleneck Path} is strongly NP-hard and not approximable within $7/6-\epsilon$ for any $\epsilon>0$ unless P=NP.
\end{thm}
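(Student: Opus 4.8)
The plan is to give an approximation--preserving (gap--preserving) reduction from the \textsc{Minimum Vertex Cover} problem to \textsc{Min-Average Bottleneck Path}, and then inherit the hardness of vertex cover: it is strongly NP-hard and, by H{\aa}stad's inapproximability result, cannot be approximated within $7/6-\epsilon$ for any $\epsilon>0$ unless P$=$NP. The reduction I build will map the optimum of \textsc{Min-Average Bottleneck Path} exactly to $\tfrac1n$ times the size of a minimum vertex cover and will turn any $s$--$t$ path into a cover of proportional size, so that both the factor $7/6-\epsilon$ and the strong NP-hardness transfer verbatim.

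Given a vertex cover instance with $V=\{v_1,\dots,v_n\}$ and edges $a_1,\dots,a_m$, I would construct a graph $G$ built from $m$ gadgets in series, one per edge. The gadget for $a_\ell=\{v_i,v_k\}$ offers the path a binary choice between two arcs, one ``labeled'' $i$ and one ``labeled'' $k$ (realized by a parallel pair, or by subdivision so that $G$ stays simple); $s$ and $t$ are the two ends of the chain. Then every $s$--$t$ path selects, for each edge, one of its endpoints, and the set of labels it uses is exactly a subset of $V$ meeting every edge, i.e.\ a vertex cover, while conversely every cover induces such a path. I then introduce $K=n$ scenarios, one per vertex: under scenario $j$ an arc costs $1$ if its label is $j$ and $0$ otherwise. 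With these $0/1$ costs the bottleneck $F(X,\pmb{c}_j)$ equals $1$ precisely when $X$ uses some arc labeled $j$, so under uniform weights $\mathrm{OWA}(X)=\tfrac1n\,|\{j:\ X\text{ uses a label-}j\text{ arc}\}|$ counts, up to the factor $\tfrac1n$, the number of \emph{distinct} vertices used by $X$.

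The core of the argument is to verify the exact identity $\min_{X\in\Phi}\mathrm{OWA}(X)=\tfrac1n\,\mathrm{OPT}_{\mathrm{VC}}$. The distinct labels of any path form a vertex cover, giving $\mathrm{OWA}(X)\ge\tfrac1n\mathrm{OPT}_{\mathrm{VC}}$; and routing each edge-gadget through an endpoint that lies in a fixed minimum cover produces a path using only labels from that cover, which gives the matching upper bound. This correspondence is moreover solution-wise: from any path I can read off in polynomial time a vertex cover of size $n\cdot\mathrm{OWA}(X)$, so a $\rho$-approximation for \textsc{Min-Average Bottleneck Path} would yield a $\rho$-approximation for \textsc{Minimum Vertex Cover}.

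Assembling the pieces then gives the theorem: the reduction is polynomial and uses only costs in $\{0,1\}$, hence all numbers are polynomially bounded and strong NP-hardness is preserved; and because the objective is preserved up to the common factor $\tfrac1n$, a $(7/6-\epsilon)$-approximation would contradict the inapproximability of vertex cover. I expect the main obstacle to be not any deep difficulty but the clean setup of the reduction, in particular the observation that $0/1$ bottleneck costs turn the ``average of maxima'' into a count of distinct labels used, together with the careful bookkeeping that this count depends only on the \emph{set} of labels (not their multiplicities) and that the value is preserved exactly so the gap survives, plus the routine handling of parallel/subdivided arcs so that $G$ is a legitimate \textsc{Bottleneck Path} instance.
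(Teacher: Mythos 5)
Your proof is correct, but it takes a genuinely different route from the paper's. The paper reduces from \textsc{Min 3-Sat} (hard to approximate within $7/6-\epsilon$ by Avidor and Zwick): arcs correspond to literals, $s$--$t$ paths to truth assignments, scenarios to clauses, and the uniform OWA value counts the fraction of satisfied clauses. You instead reduce directly from \textsc{Minimum Vertex Cover} (hard within $7/6-\epsilon$ by H{\aa}stad): a chain of per-edge gadgets, scenarios indexed by vertices, and the key observation that $0/1$ bottleneck costs turn the average of maxima into $\tfrac{1}{n}$ times the number of \emph{distinct} labels on the path, which is exactly the size of the induced cover. Both reductions are exact (objective preserved up to a fixed scalar), use only $0/1$ costs, and hence transfer both strong NP-hardness and the $7/6-\epsilon$ gap; your correspondence $\min_X \mathrm{OWA}(X)=\tfrac{1}{n}\mathrm{OPT}_{\mathrm{VC}}$ and the solution-wise recovery of a cover from a path are sound, including the handling of parallel arcs by subdivision. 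What the paper's construction buys is reusability: the same literal-gadget graph and clause-indexed scenarios are recycled almost verbatim for Theorems~\ref{thm6} and~\ref{thm9} (median and nondecreasing weights) by padding the scenario set and shifting the weight vector, whereas your vertex-indexed scenarios would need to be reworked for those variants. What your construction buys is a more self-contained and arguably more transparent argument --- the ``count of distinct labels'' interpretation of the averaged bottleneck is immediate --- and it appeals to a more classical hardness source. Either way the statement is established.
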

\begin{proof}
We show a polynomial time approximation preserving reduction from the \textsc{Min 3-SAT} problem which is known to be strongly NP-hard~\cite{KM94}. This problem is defined as follows.  We are given boolean variables $x_1,\dots,x_n$ and a collection of clauses $C_1,\dots, C_m$, where each clause is a disjunction of at most three literals (variables or their negations). We ask if there is an assignment to the variables which satisfies at most $L$ clauses. The optimization (minimization) version of the problem is hard to approximate within $7/6-\epsilon$ for any $\epsilon>0$~\cite{AZ02}. Given an instance of \textsc{Min 3-Sat}, we construct the graph shown in Figure~\ref{fig1}. 
\begin{figure}[ht]
\centering
      \includegraphics*{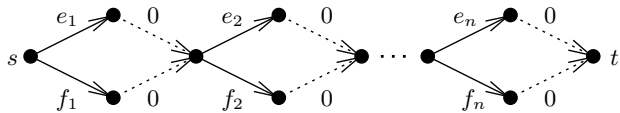}
  \caption{The graph in the reduction.} \label{fig1}
\end{figure}

The arcs $e_1,\dots,e_n$ correspond to literals $x_1,\dots,x_n$ and the arcs $f_1,\dots,f_n$ correspond to literals $\overline{x}_1,\dots,\overline{x}_n$. There is one-to-one correspondence between paths from $s$ to $t$ and assignments to the variables. We fix $x_i=1$ if a path chooses $e_i$ and $x_i=0$ if a path chooses $f_i$.
The set $\Gamma$ is constructed as follows. For each clause $C_j=(l^j_1\vee l^j_2 \vee l^j_3)$,
$j\in [m]$, we form the cost vector $\pmb{c}_j$ in which the costs of the arcs corresponding to
$l^j_1$, $l^j_2$ and $l^j_3$ are set to~$1$ and the costs of the remaining arcs are set to~0. 
Let us fix $w_j=1/m$ for $j\in [m]$.  Suppose that the answer to \textsc{Min 3-Sat} is yes. Then there is an assignment to the variables satisfying at most $L$ clauses. Consider the path $X$ corresponding to this assignment. From the construction of $\Gamma$ it follows that the cost of $X$ is equal to~1 under at most $L$ scenarios. Hence ${\rm OWA}(X)\leq L/m$. On the other hand, if the answer to \textsc{Min 3-sat} is no, then any assignment satisfies more than $L$ clauses and each path $X$ has the bottleneck cost equal to~1 for more than $L$ cost vectors. Hence ${\rm OWA}(X)>L/m$. So, the answer to \textsc{Min 3-SAT} is yes if and only if there is a path $X$ such that ${\rm OWA}(X)\leq L/m$. The reduction is approximation preserving. In consequence, the inapproximability result for \textsc{Min-Average Bottleneck Path} follows.
\end{proof}
\begin{thm}
\label{thm6}
If $K$ is unbounded, then \textsc{Min-Median Bottleneck Path} is strongly NP-hard and not at all approximable unless P=NP.
\end{thm}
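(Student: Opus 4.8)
The plan is to adapt the reduction from \textsc{Min 3-Sat} used in the proof of Theorem~\ref{thm4}, since \textsc{Min-Median} concerns the $(\lfloor K/2\rfloor+1)$th largest cost rather than the average. The key difficulty is that the median only ``sees'' whether more than half of the scenarios attain a high bottleneck cost on the chosen path, so I cannot directly encode the count ``at most $L$ satisfied clauses'' as I did for the average. To obtain a threshold at the median position, I would pad the scenario set with dummy scenarios on which every $s$--$t$ path has bottleneck cost $0$, thereby shifting the median index so that it falls exactly at the boundary between ``at most $L$ clauses satisfied'' and ``more than $L$ clauses satisfied''.

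Concretely, starting from an instance of \textsc{Min 3-Sat} with $m$ clauses, I would keep the same graph and the same $m$ clause-scenarios $\pmb{c}_1,\dots,\pmb{c}_m$ as in Figure~\ref{fig1}, in which the arcs of the literals appearing in clause $C_j$ have cost $1$ under $\pmb{c}_j$ and all other arcs have cost $0$. I would then add a controlled number $D$ of all-zero dummy scenarios so that the total number of scenarios is $K=m+D$ and the median index $\lfloor K/2\rfloor+1$ lands just above $m-L$. For a path $X$ corresponding to an assignment satisfying exactly $s$ clauses, the bottleneck cost equals $1$ on exactly $s$ of the clause-scenarios and $0$ on the remaining $m-s$ clause-scenarios and on all $D$ dummy scenarios. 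Hence, after sorting the $K$ bottleneck costs in nonincreasing order, the first $s$ entries are $1$ and the rest are $0$, so the median entry is $1$ precisely when $s$ exceeds the median position and $0$ otherwise. By choosing $D$ so that the median position equals $L+1$ (for instance $D=m-2L-1$ when this is nonnegative, adjusting parity as needed), I can arrange that $\mathrm{OWA}(X)=0$ if and only if $s\le L$, i.e. if and only if the corresponding assignment satisfies at most $L$ clauses.

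With this calibration, the answer to \textsc{Min 3-Sat} is yes if and only if there exists an $s$--$t$ path $X$ with $\mathrm{OWA}(X)=0$, and otherwise every path has $\mathrm{OWA}(X)=1$. Since deciding whether the optimum is $0$ is NP-hard and any finite multiplicative approximation factor applied to an optimum of $0$ still yields $0$, no approximation algorithm with any finite ratio can distinguish the two cases unless P${}={}$NP; this is what gives the ``not at all approximable'' conclusion, in contrast to the bounded $7/6-\epsilon$ factor of Theorem~\ref{thm4}. Strong NP-hardness follows because all costs are in $\{0,1\}$ and the construction is polynomial in $n$ and $m$.

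The main obstacle I anticipate is the parity and boundary bookkeeping needed to make the median index fall exactly at the desired threshold: the floor in $\lfloor K/2\rfloor+1$ and the requirement $D\ge 0$ force me to check that $m-2L-1$ (or its parity-corrected variant) is a valid nonnegative integer, and to handle the edge cases $L=0$ and $L$ close to $m$ separately, possibly by also padding with all-one scenarios on which every path has bottleneck cost $1$ to push the median in the opposite direction. Once the index calibration is verified, the equivalence between ``optimum $=0$'' and the \textsc{Min 3-Sat} yes-instance is immediate from the cost construction.
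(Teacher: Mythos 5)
Your overall strategy is exactly the paper's: keep the Theorem~\ref{thm4} construction, pad the scenario set with dummy scenarios so that the median index falls at the right threshold, and conclude inapproximability because deciding whether the optimum equals $0$ is NP-hard. However, your concrete calibration is backwards. With $D$ all-zero dummy scenarios a path satisfying $s$ clauses still has cost $1$ under exactly $s$ of the $K=m+D$ scenarios, so you need the median index $\lfloor K/2\rfloor+1$ to equal $L+1$, i.e.\ $D\in\{2L-m,\,2L+1-m\}$; this is nonnegative only when $L\geq \lfloor m/2\rfloor$ (roughly). Your proposed $D=m-2L-1$ is nonnegative precisely in the opposite regime, and there it yields $K=2m-2L-1$ and median index $m-L$, so the reduction would decide ``at most $m-L-1$ satisfied clauses'' rather than ``at most $L$.'' The fix is the one you only mention as a possibility at the end: for $L<\lfloor m/2\rfloor$ the paper adds $m-2L$ all-\emph{one} scenarios, giving $K=2(m-L)$ and median index $m-L+1$, and a path satisfying $s$ clauses then has cost $1$ under $s+(m-2L)$ scenarios, which is at most $m-L$ iff $s\leq L$; for $L>\lfloor m/2\rfloor$ it adds $2L-m$ all-zero scenarios, giving $K=2L$ and median index $L+1$. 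Since you correctly identified the inapproximability mechanism and anticipated the all-one padding, the gap is an arithmetic slip rather than a missing idea, but as written the stated choice of $D$ does not produce a valid reduction.
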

\begin{proof}
In order to prove this result we only need to modify the reduction given in the proof of Theorem~\ref{thm4}. We proceed as follows.
	Assume first that  $L<\lfloor m/2 \rfloor$, where $m$ is the number of causes in the instance of \textsc{Min 3-Sat}. We then add to $\Gamma$ additional $m-2L$ scenarios with the costs equal to~1 for all the arcs. So the number of scenarios is $2m-2L$. We fix $w_{m-L+1}=1$ and $w_j=0$ for the remaining scenarios. Now, the answer to \textsc{Min 3-SAT} is yes, if and only if there is a path $X$ whose cost is 1 under at most $L+m-2L=m-L$ scenarios. According to the definition of the weights $\mathrm{OWA}(X)=0$.
	Assume that $L>\lfloor m/2 \rfloor$. We then we add to $\Gamma$ additional $2L-m$ scenarios with the costs equal to~0 for all the arcs.  The number of scenarios is then $2L$. We fix $w_{L+1}=1$ and $w_j=0$ for all the remaining scenarios. Now, the answer to \textsc{Min 3-SAT} is yes, if and only if there is a path $X$ whose cost is 1 under at most $L$ scenarios. According to the definition of the weights $\mathrm{OWA}(X)=0$. We thus can see that it is NP-hard to check whether there is a path $X$ such that ${\rm OWA}(X)\leq 0$ and the theorem follows.
\end{proof}
Theorem~\ref{thm6} leads to the following corollary:
\begin{cor}
\label{cor2a}
If both $K$ and $k$ are unbounded (or $k$ is a function of $K$), then \textsc{Min-Quant}($k$)\textsc{ Bottleneck Path} is strongly NP-hard and not at all approximable unless P=NP.
\end{cor}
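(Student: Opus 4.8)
The plan is to observe that \textsc{Min-Median}~$\mathcal{BP}$ is nothing but the special case of \textsc{Min-Quant}($k$)~$\mathcal{BP}$ in which $k$ is tied to $K$ by the rule $k=\lfloor K/2 \rfloor +1$, and then to invoke Theorem~\ref{thm6} directly. Concretely, recall from Table~\ref{tabsc} that the median corresponds to the weight distribution $w_{\lfloor K/2 \rfloor +1}=1$ and $w_j=0$ otherwise, which is exactly the \textsc{Min-Quant}($k$) weight distribution with $k=\lfloor K/2 \rfloor +1$. Hence for this particular choice of $k$ as a function of $K$, the problem \textsc{Min-Quant}($k$)\textsc{ Bottleneck Path} coincides with \textsc{Min-Median Bottleneck Path}.

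First I would fix $k=\lfloor K/2 \rfloor +1$ and note that this is a legitimate instantiation of the setting described in the hypothesis of the corollary, namely $k$ being a (nonconstant) function of $K$; as $K$ grows unbounded so does $k$. Since Theorem~\ref{thm6} establishes that \textsc{Min-Median Bottleneck Path} is strongly NP-hard and not at all approximable unless P=NP, the same two conclusions transfer verbatim to \textsc{Min-Quant}($k$)\textsc{ Bottleneck Path} for this $k$. In particular, the inapproximability statement carries over because any (multiplicative) approximation algorithm for the general \textsc{Min-Quant}($k$) problem would, when run on the instances produced in the proof of Theorem~\ref{thm6} with $k=\lfloor K/2 \rfloor +1$, yield an approximation for the median problem, contradicting its non-approximability.

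The only point requiring a word of care is the precise meaning of ``not at all approximable'': the reduction underlying Theorem~\ref{thm6} produces instances whose optimal OWA value is either $0$ (yes-instances) or strictly positive (no-instances), so deciding whether the optimum is $\leq 0$ is already NP-hard, and no algorithm with any finite approximation ratio can distinguish these cases. I would simply remark that this same dichotomy holds for the \textsc{Min-Quant}($k$) reformulation since the objective values are identical. There is no genuine obstacle here: the corollary is a direct specialization of Theorem~\ref{thm6}, and the proof amounts to identifying the median as the middle quantile and checking that the hardness and inapproximability conclusions are preserved under this identification.
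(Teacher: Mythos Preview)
Your proposal is correct and matches the paper's own approach: the paper presents this corollary immediately after Theorem~\ref{thm6} with no separate proof, deriving it directly from the fact that \textsc{Min-Median} is the special case of \textsc{Min-Quant}($k$) with $k=\lfloor K/2\rfloor+1$. Your additional remarks about the $0$/positive dichotomy and the transfer of inapproximability are accurate but more detailed than what the paper provides.
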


The weights used in the OWA operator can reflect an attitude of decision maker towards the risk.  If the weights are nonincreasing, i.e. $w_1\geq w_2\geq \dots \geq w_K$, then decision maker is risk averse. In the extreme case, when $w_1=1$,  he minimizes the largest solution cost over all scenarios. Nonincreasing weights are compatible with the robust optimization framework.  On the other hand, if decision maker is risk seeking, then the weights should be nondecreasing i.e. $w_1\leq w_2\dots \leq w_K$. In the extreme case, when $w_K=1$, he minimizes the smallest solution cost over all scenarios. Theorem~\ref{thm4} implies that \textsc{Min-Owa Bottleneck Path} with both nonincreasing and nondecreasing weights is strongly NP-hard. There is, however, a fundamental difference between these two types of weight distributions. In the next section we will show that the problem with nonincreasing weights can be approximated within some guaranteed worst case ratio. On the other hand, the problem with nondecreasing weights is not at all approximable and this fact is demonstrated by the following result.
\begin{thm}
\label{thm9}
If $K$ is unbounded and the weights are nondecreasing, then \textsc{Min-Owa Bottleneck Path} is strongly NP-hard and not at all approximable unless P=NP.
\end{thm}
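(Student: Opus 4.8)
The plan is to reuse verbatim the graph and scenario construction from the proof of Theorem~\ref{thm4}, keeping the reduction from \textsc{Min 3-Sat}, and to obtain the stronger ``not at all approximable'' conclusion not by modifying the instance (as in Theorem~\ref{thm6}) but by a suitable choice of nondecreasing weights. Recall that in that construction there is a one-to-one correspondence between $s$--$t$ paths and truth assignments, the scenario set contains one scenario $\pmb{c}_j$ per clause $C_j$ (so $K=m$), all costs are $0$ or $1$, and the bottleneck cost $F(X,\pmb{c}_j)$ equals $1$ exactly when the assignment corresponding to $X$ satisfies clause $C_j$. Hence, if the assignment of $X$ satisfies exactly $p$ clauses, then the sorted cost vector of $X$ consists of $p$ ones followed by $m-p$ zeros, so that $\mathrm{OWA}(X)=\sum_{j=1}^{p} w_j$ is precisely the sum of the $p$ smallest weights.

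Next I would fix the weights $w_1=\dots=w_L=0$ and $w_{L+1}=\dots=w_m=1/(m-L)$, where $L$ is the threshold of the \textsc{Min 3-Sat} instance (one may assume the nontrivial case $L<m$). These weights are nonnegative, sum to $1$, and satisfy $w_1\le w_2\le\dots\le w_m$, so they are admissible nondecreasing OWA weights. With this choice the identity $\mathrm{OWA}(X)=\sum_{j=1}^{p}w_j$ yields $\mathrm{OWA}(X)=0$ whenever $p\le L$ and $\mathrm{OWA}(X)\ge w_{L+1}>0$ whenever $p\ge L+1$. Consequently the answer to \textsc{Min 3-Sat} is yes, i.e. some assignment satisfies at most $L$ clauses, if and only if there exists a path $X$ with $\mathrm{OWA}(X)=0$.

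I would then conclude as in Theorem~\ref{thm6}. Since all costs produced by the reduction are $0$ or $1$, no large numbers are introduced and strong NP-hardness is inherited from the strong NP-hardness of \textsc{Min 3-Sat}. For inapproximability, note that on a yes-instance the optimum is $0$, so any polynomial-time $\rho$-approximation algorithm (for any finite $\rho\ge 1$) would have to return a solution of value at most $\rho\cdot 0=0$, that is exactly $0$, whereas on a no-instance every path has value at least $w_{L+1}>0$; such an algorithm would therefore decide \textsc{Min 3-Sat}, which is impossible unless P$=$NP.

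The only real subtlety, and the step I would check most carefully, is the weight choice itself: one must verify that zeroing out precisely the first $L$ weights (those multiplying the $L$ largest costs) is compatible with the nondecreasing ordering and still gives a probability vector, and that this is exactly the device that upgrades the ``$7/6-\epsilon$'' gap of Theorem~\ref{thm4} into a $0$-versus-positive gap. Everything else, namely the graph, the path--assignment correspondence, and the clause--scenario correspondence, is imported directly from Theorem~\ref{thm4}, so no new combinatorial gadget is required.
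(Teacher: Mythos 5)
Your proposal is correct and coincides with the paper's own proof: the authors use exactly the same device, keeping the Theorem~\ref{thm4} construction and replacing the uniform weights by $w_1=\dots=w_L=0$, $w_{L+1}=\dots=w_K=1/(K-L)$, so that a satisfying-at-most-$L$ assignment gives $\mathrm{OWA}(X)=0$ and otherwise $\mathrm{OWA}(X)>0$. Your additional observation that $\mathrm{OWA}(X)=\sum_{j=1}^{p}w_j$ when exactly $p$ clauses are satisfied, and the explicit remark that $L<m$ is needed, are just slightly more detailed versions of the same argument.
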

\begin{proof}
It is enough to slightly modify the proof of Theorem~\ref{thm4} by replacing the uniform weight distribution with the following one: $w_1=w_2=\dots=w_L=0$ and $w_{L+1}=w_{L+2}=\dots=w_K=1/(K-L)$. If there is an assignment to the variables satisfying at most $L$ clauses, then the cost of the corresponding path $X$ is positive under at most $L$ scenarios and, consequently, $OWA(X)=0$. On the other hand, if each assignment satisfies more than $L$ clauses, then each path has a positive cost under more than $L$ scenarios and $OWA(X)>0$. So, it is NP-hard to check whether there is a path $X$ such that $OWA(X)\leq 0$ and the theorem follows.
\end{proof}
Theorems~\ref{thm4},~\ref{thm6} and~\ref{thm9} and Corollary~\ref{cor2a} hold  when $\mathcal{BP}$ is \textsc{Bottleneck Path}. However, the simple structure of the network in Figure~\ref{fig1} allows us to extend these results for other 
classical network problems. We can see at once, that each optimal $s$-$t$ path in $G$ (see Figure~\ref{fig1}) can easily be transformed into an optimal spanning tree in $G$, and vice versa, by adding or removing a number of dummy (dashed) arcs. Consequently, all the results obtained in this section hold when $\mathcal{BP}$ is \textsc{Bottleneck Spanning tree}. In order to modify the proofs for the \textsc{Bottleneck Cut} and \textsc{Bottleneck Assignment} problems it is enough to replace the graph
 presented in Figure~\ref{fig1} with the corresponding graphs shown in Figure~\ref{fig1a}.
\begin{figure}[ht]
\centering
      {\includegraphics*{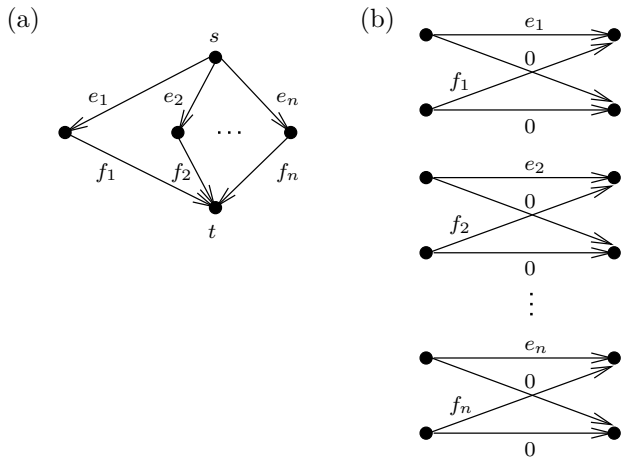}}
     \caption{The graphs: (a) for the minimum $s$-$t$  cut problem,
       (b) for the minimum assignment problem. \label{fig1a}}
\end{figure}

\section{Approximation algorithm} 
\label{secappr}
If the number of scenarios $K$ is large, then the exact algorithm proposed in Section~\ref{secpoly} is inefficient. Furthermore, if $K$ is unbounded, then \textsc{Min-Owa}~$\mathcal{BP}$ can be strongly NP-hard even if $\mathcal{BP}$ is polynomially solvable. In this section, we propose an approximation algorithm with some guaranteed worst-case performance ratio, which can be applied in some cases when $K$ is large. We start with proving the following result:
\begin{thm}
\label{thm7}
		Suppose that $w_1=\dots =w_{k-1}=0$ and $w_k>0$ and
		let $\hat{X}$ be an optimal solution to the \textsc{Min-Quant}($k$)~$\mathcal{BP}$ problem. Then for each $X\in \Phi$, it holds $OWA(\hat{X})\leq (1/w_k)OWA(X)$ and the bound is tight
\end{thm}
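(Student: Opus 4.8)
The plan is to sandwich $\mathrm{OWA}(\hat X)$ between two multiples of the $k$-th largest bottleneck cost. For a solution $X$, write $q(X)$ for the $k$-th largest value among $F(X,\pmb c_1),\dots,F(X,\pmb c_K)$; this is exactly the objective of \textsc{Min-Quant}$(k)~\mathcal{BP}$, so optimality of $\hat X$ gives $q(\hat X)\le q(X)$ for every $X\in\Phi$. The whole argument then reduces to relating $\mathrm{OWA}$ to $q$ from above on $\hat X$ and from below on an arbitrary competitor.

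First I would exploit that $w_1=\dots=w_{k-1}=0$ forces $\sum_{j=k}^{K} w_j=\sum_{j=1}^{K} w_j=1$, so in the OWA sum only the positions $k,\dots,K$ survive. For the upper bound, fix the permutation $\sigma$ sorting the costs of $\hat X$ nonincreasingly; then $F(\hat X,\pmb c_{\sigma(j)})\le F(\hat X,\pmb c_{\sigma(k)})=q(\hat X)$ for every $j\ge k$, and summing against the weights yields $\mathrm{OWA}(\hat X)\le q(\hat X)\sum_{j=k}^{K} w_j=q(\hat X)$. For the lower bound, take any $X\in\Phi$ with its own sorting permutation $\sigma'$. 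Since all costs and all weights are nonnegative, I would discard every term except $j=k$ to obtain $\mathrm{OWA}(X)\ge w_k F(X,\pmb c_{\sigma'(k)})=w_k\,q(X)$, hence $q(X)\le \mathrm{OWA}(X)/w_k$ (this is where $w_k>0$ is needed). Chaining the three facts gives $\mathrm{OWA}(\hat X)\le q(\hat X)\le q(X)\le \mathrm{OWA}(X)/w_k$, which is the claimed inequality. The only delicate point in this half is the \emph{nonnegativity of the costs}: it is precisely what licenses dropping the lower-order terms in the lower bound, and without it the factor $1/w_k$ would not follow.

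For tightness I would construct a single $\mathcal{BP}$ instance attaining both inequalities with equality simultaneously. Take \textsc{Bottleneck Path} on two parallel arcs $a,b$ joining $s$ and $t$, so that $\Phi=\{\{a\},\{b\}\}$, and use $K$ scenarios defined as follows: the cost of $a$ is $1$ under every scenario, while the cost of $b$ is $1$ under scenarios $1,\dots,k$ and $0$ under scenarios $k+1,\dots,K$. Then $q(\{a\})=q(\{b\})=1$, so $\{a\}$ is a legitimate optimal \textsc{Min-Quant}$(k)$ solution $\hat X$, with $\mathrm{OWA}(\{a\})=\sum_{j}w_j\cdot 1=1$, whereas $\mathrm{OWA}(\{b\})=\sum_{j=1}^{k-1}w_j\cdot 1+w_k\cdot 1+\sum_{j=k+1}^{K}w_j\cdot 0=w_k$. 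Taking $X=\{b\}$ gives $\mathrm{OWA}(\hat X)=1=(1/w_k)\,\mathrm{OWA}(\{b\})$, so the constant $1/w_k$ cannot be improved.

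I expect the inequality chain itself to be essentially mechanical once the reduction to $q(X)$ is made; the step that deserves the most thought is the tightness construction, namely arranging a tie in the \textsc{Min-Quant}$(k)$ objective (so that $\hat X$ may legitimately be the OWA-bad solution) while pushing all weight mass of the competitor into the single position $k$. The parallel-arc gadget above does both at once and keeps the ratio exactly at $1/w_k$.
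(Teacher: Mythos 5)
Your proof of the inequality is essentially the paper's own argument: both reduce everything to the $k$-th largest bottleneck cost $q(X)=F(X,\pmb{c}_{\sigma(k)})$, bound $\mathrm{OWA}(\hat X)\le q(\hat X)$ using $\sum_{j\ge k}w_j=1$, invoke optimality of $\hat X$ for \textsc{Min-Quant}$(k)$, and then bound $q(X)\le \mathrm{OWA}(X)/w_k$ by discarding the nonnegative lower-order terms; your explicit remark that nonnegativity of the costs is what licenses the last step is a correct and worthwhile observation. Where you genuinely diverge is the tightness construction. The paper exhibits a single selection-type instance ($n=2K$ elements, $\Phi$ the $K$-subsets, uniform weights, so $k=1$ and $w_1=1/K$) in which \textsc{Min-Max} may return a solution whose OWA value is $K$ times the optimum; this certifies tightness only for the uniform-weight special case. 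Your two-parallel-arc gadget is stronger: for \emph{every} admissible weight vector with $w_1=\dots=w_{k-1}=0<w_k$ it produces an instance where an optimal \textsc{Min-Quant}$(k)$ solution attains the ratio $1/w_k$ exactly, because the tie $q(\{a\})=q(\{b\})=1$ lets the quantile-optimal solution be the OWA-bad one while $\{b\}$ concentrates all its OWA mass in position $k$. One cosmetic caveat: the paper formally requires the $K$ scenarios to be distinct, and in your gadget scenarios $1,\dots,k$ coincide (as do $k+1,\dots,K$); this is easily repaired, e.g.\ by perturbing the cost of $b$ under scenarios $k+1,\dots,K$ to distinct values $\epsilon_j\to 0$, at the price of attaining $1/w_k$ only in the limit, or by adding dummy arcs that distinguish scenarios without affecting either bottleneck value. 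With that footnote, your argument is correct and in the tightness part somewhat more general than the paper's.
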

\begin{proof}
Let $\sigma$ be a sequence of $[K]$ such that $F(\hat{X},\pmb{c}_{\sigma(1)})\geq \dots \geq F(\hat{X},\pmb{c}_{\sigma(K)})$ and $\rho$ be a sequence of $[K]$ such that $F(X,\pmb{c}_{\rho(1)})\geq \dots \geq F(X,\pmb{c}_{\rho(K)})$. It holds:
$${\rm OWA}(\hat{X})=\sum_{j=k}^K w_j F(\hat{X},\pmb{c}_{\sigma(j)})\leq F(\hat{X},\pmb{c}_{\sigma(k)}).$$
From the definition of $\hat{X}$ and the assumption that $w_k>0$ we get
$$F(\hat{X},\pmb{c}_{\sigma(k)})\leq F(X,\pmb{c}_{\rho(k)})\leq \frac{1}{w_k}\sum_{j \in [K]} w_j F(X,\pmb{c}_{\rho(j)})=\frac{1}{w_k}{\rm OWA}(X).$$
Hence ${\rm OWA}(\hat{X})\leq (1/w_k){\rm OWA}(X)$.
To see that the bound is tight consider 
an instance of
the problem  shown in Table~\ref{tab1}, where $E=\{e_1,\dots,e_n\}$, $n=2K$, $\Phi=\{X\subseteq E: |X|=K\}$ and $w_j=1/K$ for each $j\in [K]$. If we solve the \textsc{Min-Quant}(1)~$\mathcal{BP}$ problem (i.e. the \textsc{Min-Max}~$\mathcal{BP}$ problem) we can get any solution $X\in \Phi$, which follows form the fact that $\max_{j\in [K]} c_{ij}=K$ for each $e_i\in E$.
\begin{table}[ht]
\caption{A hard example for the approximation algorithm.}\label{tab1}
\centering
\begin{tabular}{l|lllllll}
   & $\pmb{c}_1$ & $\pmb{c}_2$ & $\pmb{c}_3$ & $\dots$ & $\pmb{c}_K$ \\ \hline
$e_1$ & 0 & 0 & 0 & $\dots$ & $K$ \\
$e_2$ & 0 & 0 & 0 & $\dots$ & $K$ \\
$\vdots$ \\
$e_K$ & 0 & 0 & 0 & $\dots$ & $K$ \\ \hline
$e_{K+1}$ & $K$ & 0 & 0 & $\dots$ & 0 \\
$e_{K+2}$ & 0 & $K$ & 0 & $\dots$ & 0\\
$\vdots$ \\
$e_{2K}$ & 0 & 0 & 0 & $\dots$ & $K$
\end{tabular}
\end{table}
 Let us choose $\hat{X}=\{e_{K+1},\dots,e_{2K}\}$ with ${\rm OWA}(\hat{X})=K$.  But, when $X=\{e_{1},\dots,e_{K}\}$, then $\mathrm{OWA}(X)=1$. Hence $\mathrm{OWA}(\hat{X})=K\cdot {\rm OWA}(X)=(1/w_1)\mathrm{OWA}(X)$. 
\end{proof}
The algorithm suggested in Theorem~\ref{thm7} is efficient only when the first positive weight is close to 1 or close to $K$. Only in this case, we can solve  the \textsc{Min-Quant}($k$)~$\mathcal{BP}$ problem efficiently and obtain $\hat{X}$ by using the algorithm proposed in Section~\ref{secpoly}.
We now show several consequences of Theorem~\ref{thm7}. If the first weight, associated with the largest cost, is positive, then in order to obtain an approximate solution, we need to solve the \textsc{Min-Max}~$\mathcal{BP}$ problem, which is polynomially solvable if only $\mathcal{BP}$ is polynomially solvable. We thus get the following corollary:
\begin{cor}
If $\mathcal{BP}$ is polynomially solvable and $w_1>0$, then \textsc{Min-Owa}~$\mathcal{BP}$ is approximable within $1/w_1$.
\end{cor}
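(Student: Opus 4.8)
The plan is to derive this as an immediate specialization of Theorem~\ref{thm7}. The hypothesis $w_1>0$ means that the \emph{first} positive weight is $w_1$ itself, so the condition of Theorem~\ref{thm7} (that $w_1=\dots=w_{k-1}=0$ and $w_k>0$) holds with $k=1$, the prefix of zero weights being empty. This observation is the only place where the hypothesis is used.

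First I would apply Theorem~\ref{thm7} with $k=1$ to obtain an optimal solution $\hat{X}$ to the \textsc{Min-Quant}$(1)$~$\mathcal{BP}$ problem satisfying ${\rm OWA}(\hat{X})\leq (1/w_1){\rm OWA}(X)$ for every $X\in \Phi$. In particular this holds when $X$ is an optimal solution to \textsc{Min-Owa}~$\mathcal{BP}$, so $\hat{X}$ is already a $(1/w_1)$-approximate solution; it remains only to argue that $\hat{X}$ can be produced efficiently.

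Second, I would observe that \textsc{Min-Quant}$(1)$~$\mathcal{BP}$ is exactly \textsc{Min-Max}~$\mathcal{BP}$ (the weight distribution $w_1=1$, $w_j=0$ for $j\geq 2$), and recall from the opening of Section~\ref{secpoly} that \textsc{Min-Max}~$\mathcal{BP}$ is polynomially solvable whenever $\mathcal{BP}$ is, by computing an optimal solution for the aggregated costs $\hat{c}_i=\max_{j\in [K]}c_{ij}$. Hence $\hat{X}$ can be computed in polynomial time, and returning it gives a polynomial-time algorithm whose output is within a factor $1/w_1$ of the optimum.

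There is essentially no hard step here: the result is a direct corollary obtained by combining the ratio bound of Theorem~\ref{thm7} with the polynomial solvability of \textsc{Min-Max}~$\mathcal{BP}$. The only subtlety worth noting is that Theorem~\ref{thm7} admits $k=1$ with an empty block of leading zero weights, so its approximation guarantee specializes cleanly to the present setting; once this is checked the claim follows at once.
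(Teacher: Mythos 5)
Your proposal is correct and matches the paper's own reasoning: the paper also obtains this corollary by taking $k=1$ in Theorem~\ref{thm7}, noting that \textsc{Min-Quant}$(1)$~$\mathcal{BP}$ is \textsc{Min-Max}~$\mathcal{BP}$, which is polynomially solvable via the aggregated costs $\hat{c}_i=\max_{j\in[K]}c_{ij}$ whenever $\mathcal{BP}$ is. Nothing is missing.
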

In the previous section we have shown that the problem with nondecreasing weights can be not at all approximable. The situation is quite different if the weights are nonincreasing, because in this case we have $w_1\geq 1/K$ and we get the following two corollaries.
\begin{cor}
If the weights are nonincreasing and $\mathcal{BP}$ is polynomially solvable, then \textsc{Min-Owa}~$\mathcal{BP}$ is approximable within $1/w_1\leq K$.
\end{cor}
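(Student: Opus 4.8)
The plan is to obtain this as an immediate specialization of the preceding corollary, whose approximation guarantee $1/w_1$ holds as soon as $\mathcal{BP}$ is polynomially solvable and $w_1 > 0$; the only additional ingredient is to control the quantity $1/w_1$ under the monotonicity assumption $w_1 \geq w_2 \geq \dots \geq w_K$.

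First I would check that the hypothesis $w_1 > 0$ of that corollary is automatically met here. If $w_1$ were zero, then nonincreasingness together with nonnegativity would force $w_1 = \dots = w_K = 0$, contradicting $\sum_{j \in [K]} w_j = 1$. Hence $w_1 > 0$, and the preceding corollary already yields approximability within $1/w_1$. Concretely, applying Theorem~\ref{thm7} with $k = 1$, an optimal solution $\hat{X}$ to the \textsc{Min-Max}~$\mathcal{BP}$ problem satisfies $\mathrm{OWA}(\hat{X}) \leq (1/w_1)\mathrm{OWA}(X)$ for every $X \in \Phi$, and $\hat{X}$ can be computed in polynomial time because \textsc{Min-Max}~$\mathcal{BP}$ reduces to a single deterministic $\mathcal{BP}$ computation on the aggregated costs $\hat{c}_i = \max_{j \in [K]} c_{ij}$.

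It then remains to bound the ratio. The key step is the elementary averaging observation that, since $w_1$ is the largest of $K$ nonnegative weights summing to $1$, it is at least their mean, so $w_1 \geq 1/K$ and therefore $1/w_1 \leq K$. This turns the guarantee $1/w_1$ into the stated bound $1/w_1 \leq K$.

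I do not expect any genuine obstacle: the statement is essentially a one-line consequence of the previous corollary, and its entire content is the inequality $w_1 \geq 1/K$. The only subtlety worth flagging is ensuring that the approximating solution is produced efficiently, which is guaranteed by the polynomial solvability of \textsc{Min-Max}~$\mathcal{BP}$ noted at the start of Section~\ref{secpoly}.
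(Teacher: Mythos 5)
Your proposal is correct and matches the paper's reasoning exactly: the paper derives this corollary from the preceding one (Theorem~\ref{thm7} with $k=1$, i.e.\ solving \textsc{Min-Max}~$\mathcal{BP}$) together with the observation that nonincreasing weights summing to one force $w_1\geq 1/K$. Your additional remark that $w_1>0$ is automatic is a small but valid point the paper leaves implicit.
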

\begin{cor}
If $\mathcal{BP}$ is polynomially solvable, then \textsc{Min-Average}~$\mathcal{BP}$ is approximable within~$K$.
\end{cor}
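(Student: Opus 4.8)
The plan is to observe that \textsc{Min-Average}~$\mathcal{BP}$ is precisely the special case of \textsc{Min-Owa}~$\mathcal{BP}$ in which all weights equal $1/K$, so in particular the first weight satisfies $w_1 = 1/K > 0$. This places us exactly in the setting of Theorem~\ref{thm7} with $k=1$, and the corollary will follow as a direct instantiation of that theorem together with the polynomial solvability of \textsc{Min-Max}~$\mathcal{BP}$.

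First I would apply Theorem~\ref{thm7} with $k=1$. Since $w_1 > 0$, the theorem guarantees that an optimal solution $\hat{X}$ to \textsc{Min-Quant}(1)~$\mathcal{BP}$ satisfies ${\rm OWA}(\hat{X}) \leq (1/w_1){\rm OWA}(X)$ for every $X \in \Phi$. Because \textsc{Min-Quant}(1)~$\mathcal{BP}$ coincides with \textsc{Min-Max}~$\mathcal{BP}$, and $w_1 = 1/K$ under the uniform distribution, this reads ${\rm OWA}(\hat{X}) \leq K\cdot {\rm OWA}(X)$ for all $X$; in particular ${\rm OWA}(\hat{X}) \leq K\cdot {\rm OWA}(X^*)$ for an optimal solution $X^*$ of \textsc{Min-Average}~$\mathcal{BP}$. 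It then remains to argue that $\hat{X}$ can be computed in polynomial time, which is immediate from the discussion at the start of Section~\ref{secpoly}: an optimal solution to \textsc{Min-Max}~$\mathcal{BP}$ is obtained by solving the deterministic $\mathcal{BP}$ problem for the aggregated costs $\hat{c}_i = \max_{j\in[K]} c_{ij}$, which runs in polynomial time whenever $\mathcal{BP}$ is polynomially solvable. Hence computing and returning $\hat{X}$ yields a polynomial-time algorithm whose output is within a factor of $K$ of the optimum.

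There is essentially no obstacle here: the result is a straightforward corollary of Theorem~\ref{thm7}, and one could equally derive it from the preceding corollary on nonincreasing weights by noting that the uniform distribution is (trivially) nonincreasing with $1/w_1 = K$. The only point worth double-checking is that the ratio $K$ is not improvable by this method, but the tightness construction in Table~\ref{tab1} already exhibits an instance on which the \textsc{Min-Max} solution is a factor $1/w_1 = K$ worse than optimal under uniform weights, so the bound is exact for this approach.
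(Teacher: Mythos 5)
Your proof is correct and follows exactly the paper's intended route: the corollary is obtained by instantiating Theorem~\ref{thm7} with $k=1$ and $w_1=1/K$, solving \textsc{Min-Max}~$\mathcal{BP}$ via the aggregated costs $\hat{c}_i=\max_{j\in[K]}c_{ij}$, and noting that the Table~\ref{tab1} instance shows the factor $K$ is tight for this method. Nothing is missing.
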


In general, if the weights are nonincreasing, then the problem is approximable within some guaranteed worst-case ratio not greater than $K$. This ratio takes the largest value, equal to $K$, when the weights are uniform. Observe that the less uniform  the weight distribution  is,  the better 
 worst-case approximation ratio is. The worst case ratio becomes~1 when $w_1=1$, i.e. for the maximum criterion.
 Notice that an approximation algorithm with a worst-case ratio being a polynomially computable function of $K$ cannot exist for the general \textsc{Min-Owa}~$\mathcal{BP}$ problem unless P=NP. This is a consequence of Theorems~\ref{thm6} and~\ref{thm9}, where strong inapproximability results were shown.

\section{Summary}

In this paper we have shown some computational properties of the problem of minimizing the OWA criterion for the class of bottleneck combinatorial optimization problems with uncertain costs. The positive and negative results obtained are general and remain valid for many particular problems. We believe that some of them could be refined by taking into account a particular structure of the underlying problem $\mathcal{BP}$. The computational properties of \textsc{Min-Owa}~$\mathcal{BP}$, where $\mathcal{BP}$ is \textsc{Bottleneck Path}, \textsc{Bottleneck Spanning Tree}, \textsc{Bottleneck Cut}, or \textsc{Bottleneck Assignment}, are summarized in Table~\ref{tabbot}. The most interesting open problem is to close the approximability gap for the average criterion, since the problems are approximable within~$K$ and not approximable within $8/7-\epsilon$ for any $\epsilon>0$ unless P=NP. 
\begin{small}
\begin{table}[ht]
\caption{Summary of the known and new results for the \textsc{Min-Owa}~$\mathcal{BP}$, when $\mathcal{BP}$ is \textsc{Bottleneck Path}, \textsc{Bottleneck Spanning Tree}, \textsc{Bottleneck Cut}, or \textsc{Bottleneck Assignment}; $f(n)$ is the
time required for solving $\mathcal{FBP}$ and $g(n)$ is the time required for solving~$\mathcal{BP}$. }\label{tabbot}

\begin{tabular}{l|l|ll}  
\hline 
		Problem & $K\geq 2$ constant & $K$ unbounded \\ \hline
\textsc{Min-Owa}~$\mathcal{BP}$ &polynomially solvable & strongly NP-hard;  \\
                               &in $O(n^Kf(n))$ time  &  approximable within $1/{w_1}$ if the weights\\ 
															 &        &  are nonincreasing;\\                  
															 &       & not at all 
approximable if the weights \\
															 &       & are arbitrary (or nondecreasing)                 \\\hline \hline
\textsc{Min-Max}~$\mathcal{BP}$  & polynomially  solvable & polynomially solvable   \\
																  & in $g(n)$ time & in $O(Kn+g(n))$ time \\
                                   \hline
\textsc{Min-Min}~$\mathcal{BP}$  & polynomially solvable  & polynomially solvable  \\ 
																 & in $g(n)$ time & in $O(Kg(n))$ time \\ \hline
\textsc{Min-Average}~$\mathcal{BP}$   & polynomially solvable  & strongly NP-hard;  \\ 
																		&		in $O(n^Kf(n))$ time						&approximable within $K$ \\ 
																		&                   & not approximable within $8/7-\epsilon$ \\ \hline
\textsc{Min-Hurwicz}~$\mathcal{BP}$  &polynomially solvable & polynomially solvable  \\
																		  & in $O(n^2 f(n))$ time & in $O(Kn^2f(n))$ time \\
																	  \hline
\textsc{Min-Quant}$(k)$~$\mathcal{BP}$  &polynomially solvable & polynomially
 solvable for constant $k$ \\
                                         &  in $O(n^Kf(n))$ time  					&  (or $K-k$) in 
                                         $O\left(\binom{K}{k-1}(Kn+g(n))\right)$ time;\\
																		      &   & strongly NP-hard and \\ 
																					&   & not at all approximable if $k$ is unbounded \\ \hline
\textsc{Min-Median}~$\mathcal{BP}$ & polynomially solvable & strongly NP-hard; \\
																	&   in $O(n^Kf(n))$ time  & not at all approximable \\ \hline
 \end{tabular}
\end{table}
\end{small}

\bibliographystyle{abbrv} 
\bibliography{owa} 

\end{document}